\documentclass[12pt]{article}
\usepackage{amsmath, amssymb, amsthm, mathrsfs}
\usepackage{enumerate}
\usepackage{colordvi}

\newcommand{\e}{\mathrm{e}}
\newcommand{\R}{\mathbb{R}}

\newtheorem{claim}{Claim}[section]
\newtheorem{theorem}[claim]{Theorem}
\newtheorem{corollary}[claim]{Corollary}

\usepackage{cite}

\begin{document}
\begin{center}
{\Large{\textbf{A regular version of Smilansky model}}}

\bigskip

{\large{Diana Barseghyan$^a$, Pavel Exner$^{b,c}$}}

\end{center}

\begin{quote}

{\small a) Fachbereich Mathematik, Universit\"at Erlangen-N\"urnberg, \\ \phantom{c)} Cauerstra▀e 11,
91058 Erlangen, Germany
 \\
b) Doppler Institute for Mathematical Physics and Applied \\
\phantom{c)} Mathematics, Czech Technical University, B\v{r}ehov\'{a} 7, 11519 Prague \\
c) Nuclear Physics Institute ASCR, 25068 \v{R}e\v{z} near Prague, Czechia \\
\phantom{c)} \emph{barseghyan@math.fau.de, exner@ujf.cas.cz}}

\end{quote}

\bigskip

\textbf{Abstract.} We discuss a modification of Smilansky model in which a singular potential `channel' is replaced by a regular, below unbounded potential which shrinks as it becomes deeper. We demonstrate that, similarly to the original model, such a system exhibits a spectral transition with respect to the coupling constant, and determine the critical value above which a new spectral branch opens. The result is generalized to situations with multiple potential `channels'.

\bigskip

\textbf{Mathematical Subject Classification (2010).} 81Q10, 35J10

\bigskip

\textbf{Keywords.} Smilansky model, Schr\"odinger operators, spectral transition

\section{Introduction} \label{s: intro}
\setcounter{equation}{0}

In the seminal paper \cite{Sm04} Uzy Smilansky discussed a simple example of quantum dynamics which could exhibit a behavior one can regard as irreversible. The model in which it can be demonstrated allows for interpretation in different ways, as a one-dimensional system coupled to a heat bath, as a particular quantum graph, or as a two-dimensional quantum system described by the Hamiltonian
\begin{equation}
\label{HSmil}
H_\mathrm{Sm}=-\frac{\partial^2}{\partial x^2} +\frac12\left( -\frac{\partial^2}{\partial y^2}+y^2 \right)
+\lambda y\delta(x)\,.
\end{equation}
It was argued in \cite{Sm04} that the behavior of the system depends crucially on the coupling parameter: if $|\lambda|>1$ the particle can escape to infinity along the singular `channel' in the $y$ direction.

The claim can be made mathematically rigorous in terms of the spectral properties of such an operator: one can prove that that for $|\lambda|$ exceeding the critical value the operator has an additional branch of absolutely continuous spectrum which is not bounded from below \cite{So04}. The model was subsequently generalized to the situation when one has more than one singular `channel' ---  cf.~\cite{NS06, ES05} --- and its further properties were studied, in particular, the discrete spectrum in the subcritical case.

It has appeared that there is also another motivation to investigate such systems. Recently Guarneri has used the model --- or rather its modification in which the motion in the $x$ direction is restricted to a finite interval with periodic boundary conditions --- to describe quantum measurements \cite{Gu11}; he studied the time evolution in such a situation identifying the escape along a particular `channel' with reduction of the wave packet.

The paper \cite{Gu11} concludes with expressing the hope that `similar behavior may be reproducible with smoother interaction potentials and also in purely classical models'. The aim of the present paper is demonstrate that this is indeed the case. We are going to investigate a model in which the $\delta$ interaction with $y$-dependent strength is replaced by a smooth potential channel of increasing depth, and to show that it exhibits the analogous spectral transition as the coupling parameter exceeds a critical value.

Replacing the $\delta$ interaction by a regular potentials, however,  requires modifications, in particular, the coupling cannot be linear in $y$ and the profile of the channel has to change with $y$; in this respect our present problem is similar to another model we have investigated recently \cite{EB12}. To understand the reason one should realize that the essence of the effect lays in the fact that far from the $x$-axis the variables in the solution to the Schr\"odinger equation effectively decouple --- one can regard it as a sort of adiabatic approximation --- and the oscillator potential competes with the principal eigenvalue of the `transverse' part of the operator, which in the singular case equals to $\frac14 \lambda^2y^2$. If we want to approximate the $\delta$ interaction by a family of shrinking potential in the usual way \cite[Sec.~I.3.2]{AGHH05} we have to match the integral of the potential with the $\delta$ coupling constant, $\int U(x,y)\,\mathrm{d}x \sim y$, which can be achieved, e.g., by choosing $U(x,y)=\lambda y^2V(xy)$ for a fixed function $V$.

Inspired by these considerations we are going to investigate the model described by the partial differential operator on $L^2(\mathbb{R}^2)$ acting as
\begin{equation}
\label{H1}
H=-\frac{\partial^2}{\partial x^2}-\frac{\partial^2}{\partial y^2}+\omega^2y^2
-\lambda y^2V(xy)\chi_{\{|x|\le a\}}(y),
\end{equation}
where $\omega,\,a$ are positive constants, $\chi_{\{|y|\le a\}}$ is the indicator function of the interval $(-a,a)$, and the potential $V$ with $\mathrm{supp}\,V\subset[-a,a]$ is a nonnegative function with bounded first derivative. By Faris-Lavine theorem \cite[Thms.~X.28 and X.38]{RS} the above operator is essentially self-adjoint on $C_0^\infty (\mathbb{R}^2)$; the same is true for its generalization,
\begin{equation}
\label{HN}
H=-\frac{\partial^2}{\partial x^2}-\frac{\partial^2}{\partial y^2}+\omega^2y^2
- \sum_{j=1}^N \lambda_j y^2V_j(xy)\chi_{\{|x-b_j|\le a_j\}}(y)
\end{equation}
with a finite number of potential channels, where the functions $V_j$ are positive with bounded first derivative, with the supports contained in the intervals $(b_j-a_j,b_j+a_j)$ and such that $\mathrm{supp}\,V_j \cap \mathrm{supp}\,V_k = \emptyset$ holds for $j\ne k$.

Our aim in the present paper is to demonstrate existence of a critical coupling separating two different situations: below it the spectrum is bounded from below while above it covers the whole real line. Discussion of further properties such as the discrete spectrum in the subcritical case or time evolution of wave packets is postponed to a later paper. We note that the results discussed here depend substantially on the asymptotic behavior of the potential channels and would not change if the potential is modified in the vicinity of the $x$-axis, for instance, by replacing the cut-off functions in (\ref{H1}) and (\ref{HN}) with $\chi_{|y|\ge a}$ and $\chi_{|y|\ge a_j}$, respectively. It is also not important that in contrast to the original model with the Hamiltonian (\ref{HSmil}) we assume that the potential channels depth increases in both directions parallel to the $y$-axis.

\section{Subcritical case} \label{s: boundedness}
\setcounter{equation}{0}

To state the result we will employ a one-dimensional comparison operator
\begin{equation}
\label{comparison}
L=-\frac{\mathrm{d}^2}{\mathrm{d}x^2}+\omega^2-\lambda V(x)
\end{equation}
on $L^2(\mathbb{R})$ with the domain $H^2(\mathbb{R})$; as long as there is no danger of misunderstanding we refrain from labeling the symbol by $\omega,\,\lambda$ and $V$. The important property will be the sign of its spectral threshold; since $V$ is supposed to be nonnegative, the latter is a monotonous function of $\lambda$ and there is a $\lambda_\mathrm{crit} >0$ at which the sign changes.

We shall first focus on the subcritical coupling case.

\begin{theorem}
Under the stated assumption, the spectrum of operator $H$ given by (\ref{H1}) is bounded from below provided the operator $L$ is positive.
\end{theorem}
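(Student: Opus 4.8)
The plan is to work at the level of the quadratic form. Since $H$ is essentially self-adjoint on $C_0^\infty(\mathbb{R}^2)$, it is enough to produce a constant $c$ with $\langle\psi,H\psi\rangle\ge-c\|\psi\|^2$ for all $\psi\in C_0^\infty(\mathbb{R}^2)$. For such $\psi$ I would write the form as an iterated integral,
\[
\langle\psi,H\psi\rangle=\int_{\mathbb{R}}\!\left(\int_{\mathbb{R}}\big(|\partial_x\psi|^2+\omega^2y^2|\psi|^2-\lambda y^2V(xy)\chi_{\{|x|\le a\}}|\psi|^2\big)\,\mathrm{d}x\right)\mathrm{d}y+\int_{\mathbb{R}^2}|\partial_y\psi|^2\,\mathrm{d}x\,\mathrm{d}y\,,
\]
and observe that, the $\partial_y$-term being nonnegative, it suffices to bound from below the inner $x$-integral $q_y(\psi)$ for every fixed $y$, with a bound that is integrable in $y$.

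The heart of the matter is a fibrewise rescaling which converts $q_y$ into the comparison form $L$. For fixed $y\ne0$ I would substitute $u=xy$ in the inner integral and set $\phi_y(u):=\psi(u/y,y)$, noting that $\phi_y\in C_0^\infty(\mathbb{R})\subset H^2(\mathbb{R})=\mathrm{dom}(L)$. Tracking the Jacobian factors turns the $x$-kinetic term into $|y|\,\|\phi_y'\|_{L^2(\mathbb{R})}^2$, the harmonic term into $\omega^2|y|\,\|\phi_y\|_{L^2(\mathbb{R})}^2$, and, whenever the cut-off is inert --- which is the case for $|y|\ge1$, since then $\mathrm{supp}\,V(\cdot\,y)\subset\{|x|\le a/|y|\}\subset\{|x|\le a\}$ --- the potential term into $-\lambda|y|\,\langle\phi_y,V\phi_y\rangle$. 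Hence for $|y|\ge1$ one obtains the identity $q_y(\psi)=|y|\,\langle\phi_y,L\phi_y\rangle$, which is nonnegative precisely by the hypothesis that $L$ is positive (only $L\ge0$ actually enters).

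It remains to control the strip $|y|<1$, where the rescaling degenerates. There the perturbation is harmless in a crude pointwise way: $|\lambda y^2V(xy)\chi_{\{|x|\le a\}}|\le|\lambda|\,\|V\|_\infty$ (here $V$ is bounded, being Lipschitz with compact support), so $q_y(\psi)\ge-|\lambda|\,\|V\|_\infty\int_{\mathbb{R}}|\psi(x,y)|^2\,\mathrm{d}x$, and integrating over $|y|<1$ costs at most $|\lambda|\,\|V\|_\infty\|\psi\|^2$. Putting the two regions together gives $\langle\psi,H\psi\rangle\ge-|\lambda|\,\|V\|_\infty\|\psi\|^2$ for all $\psi\in C_0^\infty(\mathbb{R}^2)$, hence $H\ge-|\lambda|\,\|V\|_\infty$ and in particular $\sigma(H)$ is bounded from below.

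The one step I expect to require genuine care is this low-$|y|$ region: the rescaling $x\mapsto xy$ that realizes $q_y=|y|\,\langle\,\cdot\,,L\,\cdot\,\rangle$ is unitary up to the weight $|y|$ and degenerates as $y\to0$, so one cannot use it there and must instead see that the bounded strip $\{|y|<1\}$ contributes only a finite negative amount --- which works thanks to the $y^2$ weight in front of $V$, and the cut-off $\chi_{\{|x|\le a\}}$ is exactly what confines the discrepancy to that strip, in accordance with the remark in the Introduction that changes near the $x$-axis are immaterial. The remaining ingredients --- the explicit Jacobian constants, the legitimacy of the Fubini splitting for $\psi\in C_0^\infty(\mathbb{R}^2)$, and the standard passage from a form bound on a core to a lower bound for the operator --- are routine.
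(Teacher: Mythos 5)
Your argument is correct, and it takes a genuinely different route from the paper. The paper proves the bound by Neumann bracketing into the logarithmic strips $\mathbb{R}\times(\ln n,\ln(n+1)]$, freezing the potential at $y=\ln n$ on each strip with an $\mathcal{O}(\ln n/n)$ error, separating variables, and then rescaling the resulting one-dimensional operator into $\ln^2\!n\,L$; the conclusion comes from a uniform asymptotic lower bound on the pieces. You instead discard the nonnegative $\lvert\partial_y\psi\rvert^2$ term and fiber the remaining form over $y$, where the dilation $u=xy$ gives the \emph{exact} identity $q_y(\psi)=|y|\,\langle\phi_y,L\phi_y\rangle$ for $|y|\ge1$ (the cut-off being inert there because $\mathrm{supp}\,V\subset[-a,a]$), plus a trivial pointwise bound on the strip $|y|<1$ where $y^2V(xy)\le\|V\|_\infty$. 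Your version is shorter, avoids all approximation error, and yields the explicit bound $H\ge-|\lambda|\,\|V\|_\infty$, which the paper's argument does not produce. What the paper's decomposition buys in exchange is flexibility: the strip-plus-freezing scheme is what the authors reuse verbatim for Corollary~\ref{c: interval} (motion in $x$ restricted to $(-c,c)$ with various boundary conditions) and for the multi-channel bracketing, whereas your exact fibration exploits the fact that the dilation $x\mapsto xy$ maps the fiber $\mathbb{R}$ onto itself and identifies every fiber form with the \emph{same} comparison operator $L$ on $L^2(\mathbb{R})$ --- on a finite $x$-interval the rescaled fiber would be a $y$-dependent interval and the clean identity is lost. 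The routine ingredients you flag (Fubini on $C_0^\infty$, the passage from a form bound on a core of an essentially self-adjoint operator to semiboundedness of its closure) are indeed standard and unproblematic.
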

\begin{proof}
It is obvious it sufficient to prove the claim for $\lambda=1$. We employ Neumann bracketing. Let $h_n$ and $\widetilde{h}_n$ be the restrictions of operator $H$ to the strips $G_n=\mathbb{R} \times \left\{y:\: \ln n<y\le\ln(n+1)\right\}$ and $\widetilde{G}_n=\mathbb{R} \times \left\{y:\: -\ln(n+1)<y\le-\ln n\right\},\:n=1,2, \ldots$, with Neumann boundary conditions. Then we have the inequality
\begin{equation}\label{Neumann} H\ge\bigoplus_{n=1}^\infty h_n\oplus \widetilde{h}_n \,,
\end{equation}
and to prove the claim we have to demonstrate that the sets $\sigma(h_n)$ and $\sigma (\widetilde{h}_n)$ have a uniform lower bound as $n\to\infty$. Using the fact that the function $V$ has a bounded derivative we find
$$
V(x y)-V(x\ln n)=\mathcal{O}\left(\frac{1}{n\ln n}\right)\,, \quad
y^2-\ln^2n=\mathcal{O}\left(\frac{\ln n}{n}\right)\,,
$$
for any  $(x,y) \in G_n$, and consequently
$$ 
y^2V(xy)-\ln^2\!n\,V(x\ln n)=\mathcal{O}\left(\frac{\ln n}{n}\right)\,.
$$ 
Similarly, we have
$$ 
y^2V(xy)-\ln^2\!n\,V(-x\ln n)=\mathcal{O}\left(\frac{\ln n}{n}\right).
$$ 
for for any $(x,y)\in\widetilde{G}_n$. These relations yield asymptotic inequalities
\begin{eqnarray}
\inf\sigma(h_n)\ge\inf\sigma(l_n)+\mathcal{O}\left(\frac{\ln n}{n}\right)\,, \nonumber \\[-.5em] \label{l_n} \\[-.5em]
\inf\sigma\left(\widetilde{h}_n\right)\ge\inf\sigma\left(\widetilde{l}_n\right)+
\mathcal{O}\left(\frac{\ln n}{n}\right)\,, \nonumber
\end{eqnarray}
in which the Neumann operators $l_n:=-\frac{\partial^2}{\partial x^2}-\frac{\partial^2}{\partial y^2}+\omega^2\ln^2\!n-\ln^2\!n\,V(x\ln n)$ on $G_n$ and $\widetilde{l}_n:=-\frac{\partial^2}{\partial x^2}-\frac{\partial^2}{\partial y^2}+\omega^2\ln^2\!n-\ln^2\!n\,V(-x\ln n)$ on $\widetilde{G}_n$ have separated variables. Since the minimal eigenvalue of $-\frac{\mathrm{d}^2}{\mathrm{d}y^2}$ on the interval
with Neumann boundary conditions defined on intervals $(\ln n <y \le\ln(n+1)),\:n=1,2,\ldots,$ is zero, we have $\inf\sigma(l_n) =\inf\sigma\big(l^{(1)}_n\big)$, where
$$
l^{(1)}_n=-\frac{\mathrm{d}^2}{\mathrm{d}x^2} +\omega^2\,\ln^2\!n-\ln^2\!n\,V(x\ln n)
$$
acts on $L^2(\mathbb{R})$. Note that the cut-off function $\chi_{\{|x|\le a\}}$ in (\ref{H1}) plays no role in the asymptotic estimate as it affects a finite number of terms only. By the change of variable $x=\frac{t}{\ln n}$ the last operator is unitarily equivalent to \mbox{$\ln^2\!n\,L$} which is positive as long as $L$ is positive. In the same way one proves that $\widetilde{l}_n$ is positive under the assumption of the theorem; this in combination with (\ref{l_n}) concludes the proof.
\end{proof}

\noindent By a straightforward modification of the proof we get the following claim.

\begin{corollary} \label{c: interval}
Let $H$ be the operator on $(-c,c)\times \mathbb{R}$ for some $c\ge a$ given the differential expression (\ref{H1}) with Dirichlet (Neumann, periodic) boundary conditions in the variable $x$. The spectrum of $H$ is bounded from below if \mbox{$L\ge 0$} holds, where $L$ is the operator (\ref{comparison}) on $L^2(-c,c)$ with Dirichlet (respectively, Neumann or periodic) boundary conditions.
\end{corollary}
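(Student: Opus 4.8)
The plan is to run the Neumann bracketing of the Theorem with only the $y$-direction treated as before, keeping the full interval $(-c,c)$ in the $x$-variable throughout. I would again cut $(-c,c)\times\mathbb{R}$ into the horizontal strips $(-c,c)\times\{\ln n<y\le\ln(n+1)\}$ and their negative-$y$ mirror images, impose Neumann conditions on the horizontal cuts $y=\ln n$ while keeping the prescribed Dirichlet, Neumann or periodic condition on the lateral sides $x=\pm c$, and use the bracketing inequality (\ref{Neumann}). The asymptotic estimate $y^2V(xy)-\ln^2 n\,V(x\ln n)=\mathcal{O}(\ln n/n)$ and its analogue on the lower strips are unchanged, since they involve only the coordinates and $V$, not the lateral boundary; hence the reduction to the separated comparison operators $l_n,\widetilde{l}_n$ goes through verbatim, now posed on the strip with the chosen lateral condition.

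Next I would separate variables exactly as before. The transverse ($y$) factor is $-\mathrm{d}^2/\mathrm{d}y^2$ with Neumann conditions on an interval, whose lowest eigenvalue is $0$ with constant ground state; this is independent of the lateral condition, so $\inf\sigma(l_n)=\inf\sigma(l_n^{(1)})$ with $l_n^{(1)}=-\mathrm{d}^2/\mathrm{d}x^2+\omega^2\ln^2 n-\ln^2 n\,V(x\ln n)$ now acting on $L^2(-c,c)$ with the Dirichlet, Neumann or periodic condition, the cut-off again affecting only finitely many strips. I would then apply the dilation $x=t/\ln n$ to write $l_n^{(1)}$ as $\ln^2 n$ times a comparison operator of the form (\ref{comparison}), and invoke the hypothesis $L\ge 0$ together with $\mathcal{O}(\ln n/n)\to 0$ to extract a uniform lower bound, in parallel with the Theorem.

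The step that is genuinely new, and the one I expect to be the main obstacle, is the interaction of this dilation with the fixed lateral interval. On the whole plane the dilation is harmless because $\mathbb{R}$ is scale invariant; here, however, $x=t/\ln n$ carries $(-c,c)$ to the expanding interval $(-c\ln n,c\ln n)$, so $l_n^{(1)}$ is unitarily equivalent to $\ln^2 n$ times the operator (\ref{comparison}) on $L^2(-c\ln n,c\ln n)$ rather than on $L^2(-c,c)$. The real content of the corollary is thus that the sign of the spectral threshold is not destroyed as the interval is dilated, and this must be argued separately for each boundary condition. For the Neumann and periodic cases I would control it through monotonicity of the threshold in the interval length, using that the constant mode survives and that $V$ has compact support, so that positivity on $(-c,c)$ keeps the threshold on the favourable side as the interval grows; it is precisely here, and most delicately in the Dirichlet case where enlarging the interval lowers the threshold, that the estimate has to be made uniform in $n$ in order to close the argument and obtain the claimed lower bound on $\sigma(H)$.
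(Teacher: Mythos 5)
Your dissection of where the ``straightforward modification'' stops being straightforward is exactly right, and it is the heart of the matter --- the paper offers no argument for this corollary beyond that one phrase, so your reconstruction is precisely the proof it has in mind. The bracketing in $y$, the potential approximation and the separation of variables do carry over verbatim, and after the dilation $x=t/\ln n$ everything reduces to bounding, uniformly in $n$, the quantities $\ln^2\!n\,\inf\sigma\bigl(L_{(-c\ln n,\,c\ln n)}\bigr)$, where $L_{(A,B)}$ denotes the expression (\ref{comparison}) on $L^2(A,B)$ with the chosen lateral condition. Your Neumann argument closes this correctly: imposing extra Neumann conditions at $t=\pm c$ and using $\mathrm{supp}\,V\subset[-a,a]\subset[-c,c]$ gives $\inf\sigma\bigl(L^{N}_{(-C,C)}\bigr)\ge\min\bigl(\inf\sigma\bigl(L^{N}_{(-c,c)}\bigr),\omega^2\bigr)\ge0$ for every $C>c$, which is all the bracketing needs.

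In the Dirichlet case, however, your proposal ends with the announcement that ``the estimate has to be made uniform in $n$'', and this gap is genuine: it cannot be closed under the stated hypothesis. Since the form domains $H_0^1(-C,C)$ grow with $C$, the thresholds $\inf\sigma\bigl(L^{D}_{(-C,C)}\bigr)$ decrease, as $C\to\infty$, to the threshold of the whole-line operator (\ref{comparison}), call it $L_{\mathbb{R}}$; moreover $\inf\sigma\bigl(L^{D}_{(-c,c)}\bigr)>\inf\sigma(L_{\mathbb{R}})$ holds strictly, because here the whole-line ground state exists for every positive coupling (the essential spectrum starts at $\omega^2>0$) and does not vanish at $\pm c$. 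By continuity in the coupling constant there is therefore a nonempty window of couplings with $\inf\sigma(L_{\mathbb{R}})<0\le\inf\sigma\bigl(L^{D}_{(-c,c)}\bigr)$. In that window your lower bound $\ln^2\!n\,\inf\sigma\bigl(L^{D}_{(-c\ln n,\,c\ln n)}\bigr)$ diverges to $-\infty$, and in fact $H$ really is unbounded below: the Weyl sequence constructed in the proof of Theorem~\ref{th: interval} uses only the whole-line ground state $h$, cut off by $\phi\in C_0^2(-c,c)$, so it applies whenever $\inf\sigma(L_{\mathbb{R}})<0$, irrespective of the sign of the Dirichlet operator on $(-c,c)$. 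Thus no uniform-in-$n$ estimate exists; the Dirichlet version needs the stronger hypothesis $L_{\mathbb{R}}\ge0$ (which the Neumann interval condition implies by bracketing, but the Dirichlet interval condition does not), and under that hypothesis it is immediate, since extension by zero gives $H_0^1(-C,C)\subset H^1(\mathbb{R})$ and hence $\inf\sigma\bigl(L^{D}_{(-C,C)}\bigr)\ge\inf\sigma(L_{\mathbb{R}})\ge0$ for every $C$. The same caveat applies to your periodic argument: the natural bracketing bound is $\inf\sigma\bigl(L^{\mathrm{per}}_{(-C,C)}\bigr)\ge\min\bigl(\inf\sigma\bigl(L^{N}_{(-c,c)}\bigr),\omega^2\bigr)$, so it requires non-negativity of the Neumann operator on $(-c,c)$, not of the periodic one.
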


\section{Supercritical case} \label{s: unboundedness}
\setcounter{equation}{0}

Let us turn to the case when the `escape to infinity' is possible.
\begin{theorem}
Under our hypotheses, $\sigma(H)=\mathbb{R}$ holds if $\inf\sigma(L)<0$.
\end{theorem}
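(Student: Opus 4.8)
The plan is to prove $\sigma(H)=\mathbb{R}$ by exhibiting, for every $E\in\mathbb{R}$, a singular (Weyl) sequence, that is, a sequence $\Psi_j$ of unit vectors with $\Psi_j\rightharpoonup 0$ and $\|(H-E)\Psi_j\|\to 0$; since $H$ is self-adjoint this gives $E\in\sigma(H)$, and as $E$ is arbitrary the claim follows. The hypothesis $\inf\sigma(L)<0$ is what makes this possible: because $V$ has compact support, $\sigma_{\mathrm{ess}}(L)=[\omega^2,\infty)$, so $\mu:=\inf\sigma(L)<0<\omega^2$ is a genuine discrete eigenvalue of $L$ with a real, exponentially decaying eigenfunction $\phi\in H^2(\mathbb{R})$, $\|\phi\|=1$. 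All trial functions will be localized at large $|y|$, where the cut-off $\chi_{\{|x|\le a\}}$ is automatically satisfied on the (shrinking) support of $V(xy)$ and therefore plays no role, exactly as in the proof of the subcritical theorem.

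The building block is the asymptotic decoupling of the transverse part. Under the fibre rescaling $t=xy$ the operator $-\partial_x^2+\omega^2y^2-\lambda y^2V(xy)$ becomes $y^2L_t$ with $L_t=-\partial_t^2+\omega^2-\lambda V(t)$, so $\phi(xy)$ is its eigenfunction with eigenvalue $\mu y^2$. Inserting a product ansatz $\Psi=\phi(xy)\,\psi(y)$ and projecting onto $\phi$ leads to the effective one-dimensional operator
\begin{equation}
H_{\mathrm{eff}}=-\frac{\mathrm{d}^2}{\mathrm{d}y^2}+\mu y^2
\end{equation}
(up to a harmless $O(y^{-2})$ term), an inverted harmonic oscillator: as $\mu<0$ its potential tends to $-\infty$ and its spectrum is the whole real line. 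The heuristics thus already point to $\sigma(H)=\mathbb{R}$, and the real task is to turn the product ansatz into honest singular sequences.

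The main obstacle is the non-adiabatic coupling generated by the $y$-dependence of the transverse profile. Computing $(H-E)\Psi$ for $\Psi=\phi(xy)\psi(y)$ one finds, besides the effective-equation term $\phi(xy)\,[-\psi''+\mu y^2\psi-E\psi]$, cross terms $-2x\phi'(xy)\psi'(y)$ and $-x^2\phi''(xy)\psi(y)$. The diagonal part of the first-order cross term is $O(y^{-2})$ because of the symmetry identity $\int t\,\phi(t)\phi'(t)\,\mathrm{d}t=-\tfrac12$, which is exactly why $H_{\mathrm{eff}}$ comes out clean; its off-diagonal part, however, does not decay, and a direct estimate shows $\|2x\phi'(xy)\psi'\|$ to be comparable to $\|\Psi\|$ whenever $\psi$ is oscillatory. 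The reason is structural: at height $y$ the transverse spectral gap is of order $y^2$, while the longitudinal kinetic energy forced by $H_{\mathrm{eff}}\psi=E\psi$ is $E-\mu y^2\sim|\mu|y^2$, so the two scales are comparable and the adiabatic separation is only marginal. I expect this to be the crux of the argument.

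To defeat it I would choose the longitudinal profile so that the longitudinal momentum stays small relative to $y$. For $E<0$ the effective potential $\mu y^2$ meets the energy at the classical turning point $y_*=\sqrt{|E|/|\mu|}$; taking $\psi$ to be an Airy-type profile of width $(|\mu|y_*)^{-1/3}$ centred at $y_*$ renders $\psi'/y$, and hence the offending cross term, of order $y_*^{-2/3}$. Pushing $y_*\to\infty$ (equivalently $E\to-\infty$) along this family yields genuine singular sequences, so $\sigma(H)$ is unbounded below and reaches arbitrarily far down. The high-energy part is obtained near the $x$-axis, where the transverse harmonic confinement $-\partial_y^2+\omega^2y^2$ together with free motion in $x$ gives $[\omega,\infty)\subset\sigma_{\mathrm{ess}}(H)$ by standard Weyl sequences. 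It then remains to fill the intermediate energies; for this I would use that the turning-point energies vary continuously with $y_*$ with an error tending to $0$, that $\sigma(H)$ is closed, and a localization argument identifying $\sigma_{\mathrm{ess}}(H)$ with the spectrum of the inverted oscillator at infinity, which together should force the absence of any gap and hence $\sigma(H)=\mathbb{R}$.
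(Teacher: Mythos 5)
Your overall strategy (Weyl sequences built from the transverse ground state $\phi(xy)$, recognition of the effective inverted oscillator $-\mathrm{d}^2/\mathrm{d}y^2+\mu y^2$, and identification of the non-adiabatic cross term $2x\phi'(xy)\psi'(y)$ as the crux) matches the paper's, and your diagnosis of that cross term as the central obstacle is exactly right. But your way around it has a genuine gap. The Airy-type quasimode of width $(|\mu|y_*)^{-1/3}$ centred at the turning point $y_*$ has its energy tied to the localization point: it yields $\|(H-\mu y_*^2)\Psi_{y_*}\|\le C\,y_*^{-2/3}\|\Psi_{y_*}\|$, so for a \emph{fixed} energy $E=\mu y_*^2$ the relative error is a fixed positive number, not a sequence tending to zero. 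This proves only $\mathrm{dist}(E,\sigma(H))\le C|E|^{-1/3}$ for large negative $E$ (hence unboundedness below), not $E\in\sigma(H)$. Your proposed repair --- continuity of the turning-point energies, closedness of $\sigma(H)$, and ``identifying $\sigma_{\mathrm{ess}}(H)$ with the inverted oscillator at infinity'' --- does not close it: a distance bound of size $|E|^{-1/3}$ is perfectly compatible with a spectrum consisting, say, of isolated points accumulating only at $-\infty$ with spacing of that order, and the claimed identification of $\sigma_{\mathrm{ess}}(H)$ is precisely the statement to be proved, not a tool available for proving it.

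What is missing is a construction that works at an arbitrary \emph{fixed} energy, i.e.\ quasimodes supported in the classically allowed region $y\in[n_k,kn_k]$ with $n_k\to\infty$, where the WKB phase forces $\psi'\sim iy\,\psi$ and the cross term is $O(1)$ --- the very regime you abandoned. The paper handles it by adding an explicit first-order adiabatic corrector: after normalizing $\inf\sigma(L)=-1$, the trial function is $\psi_k(x,y)=\bigl(h(xy)+y^{-2}f(xy)\bigr)\,\e^{i\epsilon_\mu(y)}\chi_k(y/n_k)$ with $f(t)=-\frac{i}{2}t^2h(t)$ and $\epsilon_\mu(y)=\int_{\sqrt{|\mu|}}^y\sqrt{t^2+\mu}\,\mathrm{d}t$. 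The identity $-(t^2h)''+t^2h(1+\omega^2-V)=-4th'-2h$ makes the $O(1)$ residual coming from the cross term vanish identically, and the remaining terms are killed by choosing the longitudinal cutoffs $\chi_k$ with $\int_1^k z^{-1}\chi_k^2(z)\,\mathrm{d}z=1$ and $\int_1^k z(\chi_k'(z))^2\,\mathrm{d}z<\varepsilon$ and then taking $n_k$ large; disjointness of the supports gives weak convergence to zero. In this way every real $\mu$ is placed in $\sigma_{\mathrm{ess}}(H)$ directly, with no separate high-energy branch and no gap-filling step. Unless you produce an analogous corrector (or some other mechanism making the error vanish at fixed energy), your argument establishes only that $\sigma(H)$ is unbounded below and comes within $O(|E|^{-1/3})$ of every sufficiently negative $E$, which is strictly weaker than $\sigma(H)=\mathbb{R}$.
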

\begin{proof}
To prove that any real number $\mu$ belongs to essential spectrum of operator $H$ we are going to use Weyl's criterion \cite[Thm.~VII.12]{RS}: we have to find a sequence $\{\psi_k\}_{k=1}^\infty\subset D(H)$ such that $\|\psi_k\|=1$ which contains no convergent subsequence and
$$ 
\|H\psi_k-\mu\psi_k\|\to0,\quad\text{as}\quad k\to\infty
$$ 
holds. Since the claim is invariant under scaling transformations we can suppose without loss of generality that $\inf\sigma(L)=-1$. The spectral threshold is easily seen to be a simple isolated eigenvalue; we denote the corresponding normalized eigenfunction of $L$ by $h$. Our aim is to show first that $0\in\sigma_{\mathrm{ess}}(H)$.

We fix a positive $\varepsilon$ and choose a natural number $k=k(\varepsilon)$ with which we associate a function $\chi_k\subset C_0^2(1,k)$ satisfying the following conditions
\begin{equation}\label{chi.conditions}
\int_1^k\frac{1}{z}\chi_k^2(z)\,\mathrm{d}z=1
\quad\text{and}\quad
\int_1^k z(\chi'_k(z))^2\,\mathrm{d}z<\varepsilon.
\end{equation}
To give an example, consider the function
$$
\tilde{\chi}_k(z)=\frac{8\ln^3z}{\ln^3k}\chi_{\left\{1\le z\le\sqrt{k}\right\}}(z)
+\frac{2\ln k-2\ln z}{\ln k}\chi_{\left\{\sqrt{k}+1\le z\le k-1\right\}} (z)
$$$$+g_k(z)\chi_{\left\{\sqrt{k}<z<\sqrt{k}+1\right\}}(z)+q_k(z)\chi_{\left\{k-1<z\le k\right\}} (z),
$$
where $g_k$ and $q_k$ are interpolating functions chosen in such a way that $\tilde{\chi}_k\in C_0^2(1,k)$. The first integral in (\ref{chi.conditions}) is positive for $\tilde{\chi}_k$, in fact we have $\int_1^{\sqrt{k}} \frac{1}{z}\tilde{\chi}_k^2(z)\, \mathrm{d}z\ge \frac14$, hence we can define $\chi_k(z)=\left(\int_1^k\frac{1}{z} \tilde{\chi}_k^2(z)\, \mathrm{d}z\right)^{-1/2}\tilde{\chi}_k(z)$. This function satisfies by definition the first condition of (\ref{chi.conditions}) and one can check that it also satisfies the second one provided $k$ is sufficiently large; this follows from the fact that $\int_1^k z(\chi'_k(z))^2\,\mathrm{d}z=\mathcal{O}\left(\frac{1}{\ln k}\right)$ as $k\to\infty$.

Such functions allow us to construct the Weyl sequence we seek. Given a function $\chi_k$ with the described properties, we define
\begin{equation}
\label{sequence}
\psi_k(x,y):=h(x y)\,\e^{iy^2/2}\chi_k\left(\frac{y}{n_k}\right)
+\frac{f(x y)}{y^2}\, \e^{iy^2/2}\chi_k\left(\frac{y}{n_k}\right)\,,
\end{equation}
where $f(t):=-\frac{i}{2}\,t^2h(t),\:t\in\mathbb{R}$, and $n_k\in\mathbb{N}$ is a positive integer to chosen later. For the moment we just note that choosing $n_k$ large enough for a given $k$ one can achieve that $\|\psi_k\|_{L^2(\mathbb{R}^2)}\ge\frac{1}{2}$ as the following estimates show,
\begin{eqnarray}
\lefteqn{\int_{\mathbb{R}^2}\left|h(x y)\,\e^{iy^2/2}\chi_k\left(\frac{y}{n_k}\right)
\right|^2\,\mathrm{d}x\,\mathrm{d}y=\int_{n_k}^{kn_n}\int_{\mathbb{R}}
\left|h(x y)\chi_k\left(\frac{y}{n_k}\right)
\right|^2\,\mathrm{d}x\,\mathrm{d}y} \nonumber \\ &&
=\int_{n_k}^{kn_n}\int_{\mathbb{R}}
\frac{1}{y}\left|h(t)\chi_k\left(\frac{y}{n_k}\right)
\right|^2\,\mathrm{d}t\,\mathrm{d}y=\int_{\mathbb{R}}|h(t)|^2\,\mathrm{d}t
\,\int_{n_k}^{kn_n}\frac{1}{y}\left|\chi_k\left(\frac{y}{n_k}\right)
\right|^2\,\mathrm{d}y \nonumber \\ &&
\label{firstpart}
=\int_{n_k}^{kn_n}\frac{1}{y}\left|\chi_k\left(\frac{y}{n_k}\right)
\right|^2\,\mathrm{d}y=\int_1^k\frac{1}{z}\left|\chi_k(z)
\right|^2\,\mathrm{d}z=1
\end{eqnarray}
and
\begin{eqnarray}
\lefteqn{\int_{\mathbb{R}^2}\left|\frac{1}{y^2}
f(x y)\,\e^{iy^2/2}\chi_k\left(\frac{y}{n_k}\right)
\right|^2\,\mathrm{d}x\,\mathrm{d}y=\int_{n_k}^{kn_k}\int_{\mathbb{R}}
\left|\frac{1}{y^2}f(x y)\,\chi_k\left(\frac{y}{n_k}\right)
\right|^2\,\mathrm{d}x\,\mathrm{d}y} \nonumber \\ && =\int_{n_k}^{kn_k}\int_{\mathbb{R}}
\frac{1}{y}\left|\frac{1}{y^2}f(t)\chi_k\left(\frac{y}{n_k}\right)\right|^2\,
\mathrm{d}t\,\mathrm{d}y\le\frac{1}{n_k^5}\int_{n_k}^{kn_k}\int_{\mathbb{R}}
\left|f(t)\chi_k\left(\frac{y}{n_k}\right)\right|^2\,
\mathrm{d}t\,\mathrm{d}y \nonumber \\ && =\frac{1}{n_k^4}\int_{\mathbb{R}}|f(t)|^2\,\mathrm{d}t
\,\int_1^k|\chi_k(z)|^2\,\mathrm{d}z<\frac{1}{16}\,; \label{secondpart}
\end{eqnarray}
note that since the potential $V$ has a compact support by assumption, the ground state eigenfunction $h$ decays exponentially as $|x|\to\infty$, hence the first integral in the last expression converges.

Our next aim is to show that $\|H\psi_k\|_{L^2(\mathbb{R}^2)}^2<c\varepsilon$ with a fixed $c$ holds for $k=k(\varepsilon)$. By a straightforward calculation one gets
$$
\frac{\partial^2\psi_k}{\partial x^2}=y^2h''(x y)\,\e^{iy^2/2}\chi_k\left(\frac{y}{n_k}
\right)+f''(x y)\,\e^{iy^2/2}\chi_k\left(\frac{y}{n_k}\right)
$$
and
\begin{eqnarray}
\lefteqn{\frac{\partial^2\psi_k}{\partial y^2}
=x^2h''(x y)\,\e^{iy^2/2}\chi_k\left(\frac{y}{n_k}\right)
+2ix y h'(x y)\,\e^{iy^2/2}\chi_k\left(\frac{y}{n_k}\right)} \nonumber \\ &&
+\frac{2x}{n_k}h'(x y)\,\e^{iy^2/2}\chi_k'\left(\frac{y}{n_k}\right)
-y^2h(x y)\,\e^{iy^2/2}\chi_k\left(\frac{y}{n_k}\right) \nonumber \\ &&
+i h(x y)e^{iy^2/2}\chi_k\left(\frac{y}{n_k}\right)
+2\frac{i y}{n_k}h(x y)\,\e^{iy^2/2}\chi'_k\left(\frac{y}{n_k}\right)\nonumber \\ &&
+\frac{1}{n_k^2}h(x y)\,\e^{iy^2/2}\chi_k''\left(\frac{y}{n_k}\right)
+\frac{x^2}{y^2}f''(x y)\,\e^{iy^2/2}\chi_k\left(\frac{y}{n_k}\right)\nonumber \\ &&
+2\frac{i x}{y}f'(x y)\,\e^{iy^2/2}\chi_k\left(\frac{y}{n_k}\right)
+\frac{2x}{n_k y^2}f'(x y)\,\e^{iy^2/2}\chi_k'\left(\frac{y}{n_k}\right) \nonumber \\ &&
-f(x y)\,\e^{iy^2/2}\chi_k\left(\frac{y}{n_k}\right)
+\frac{i}{y^2}f(x y)\,\e^{iy^2/2}\chi_k\left(\frac{y}{n_k}\right) \nonumber \\ &&
+\frac{1}{y^2 n_k^2}f(x y)\,\e^{iy^2/2}\chi_k''\left(\frac{y}{n_k}\right)
+\frac{2i}{n_ky}f(x y)\,\e^{iy^2/2}\chi_k'\left(\frac{y}{n_k}\right)\nonumber \\ &&
-\frac{4x}{y^3}f'(x y)\,\e^{iy^2/2}\chi_k\left(\frac{y}{n_k}\right)
-\frac{4i}{y^2}f(x y)\,\e^{iy^2/2}\chi_k\left(\frac{y}{n_k}\right) \nonumber \\ &&
-\frac{4}{n_k y^3}f(x y)\,\e^{iy^2/2}\chi_k'\left(\frac{y}{n_k}\right)
+\frac{6}{y^4}f(x y)e^{iy^2/2}\chi_k\left(\frac{y}{n_k}\right). \label{calculations}
\end{eqnarray}
We want to show that choosing $n_k$ sufficiently large one can make the terms at the right hand side of (\ref{calculations}) as small as we wish. Changing the integration variables, we get the following estimate,
\begin{eqnarray*}
\lefteqn{\int_{\mathbb{R}^2}\left|x^2\,h''(x y)\,\e^{iy^2/2}\chi_k\left(\frac{y}{n_k}
\right)\right|^2\,\mathrm{d}x\,\mathrm{d}y=\int_{n_k}^{kn_{k}}\int_{\mathbb{R}}
\left|x^2\,h''(x y) \chi_k\left(\frac{y}{n_k}\right)
\right|^2\,\mathrm{d}x\,\mathrm{d}y} \\ &&
\hspace{-2em} =\int_{n_k}^{kn_k}\frac{1}{y^5}
\left|\chi_k\left(\frac{y}{n_k}\right)\right|^2\,\mathrm{d}y
\,\int_{\mathbb{R}}t^4|h''(t)|^2\,\mathrm{d}t \le\frac{1}{n_k^4}
\int_1^k|\chi_k(z)|^2\mathrm{d}z\,\int_{\mathbb{R}}t^4|h''(t)|^2\,\mathrm{d}t\,,
\end{eqnarray*}
where the last integral again converges from the reason described above. In the same way we establish the remaining inequalities which we need to demonstrate our claim:
\begin{eqnarray*}
\lefteqn{\int_{\mathbb{R}^2}\left|\frac{x}{n_k}h'(x y)\,\e^{iy^2/2}\chi_k'\left(\frac{y}{n_k}\right)
\right|^2\,\mathrm{d}x\,\mathrm{d}y=\int_{n_k}^{kn_k}\int_{\mathbb{R}}
\frac{1}{y}\left|\frac{t}{n_k y}h'(t)\chi_k'\left(\frac{y}{n_k}\right)\right|^2
\,\mathrm{d}t\,\mathrm{d}y} \\ && \le\frac{1}{n_k^4}
\int_1^k|\chi_k'(z)|^2\mathrm{d}z\,\int_{\mathbb{R}}t^2|h'(t)|^2\,\mathrm{d}t\,,
\phantom{AAAAAAAAAAAAAAAAAAAAAAAA}
\end{eqnarray*}
\begin{eqnarray*}
\lefteqn{\int_{\mathbb{R}^2}\left|\frac{1}{n_k^2}h(x y)\,\e^{iy^2/2}\chi_k''\left(\frac{y}{n_k}
\right)\right|^2\,\mathrm{d}x\,\mathrm{d}y=\frac{1}{n_k^4}\int_{n_k}^{kn_k}
\int_{\mathbb{R}}\frac{1}{y}\left|h(t)\chi_k''\left(\frac{y}{n_k}\right)
\right|^2\,\mathrm{d}t\,\mathrm{d}y} \\ && \le\frac{1}{n_k^4}\int_1^k|\chi''_k(z)|^2\mathrm{d}z
\,\int_{\mathbb{R}}|h(t)|^2\,\mathrm{d}t\,,
\phantom{AAAAAAAAAAAAAAAAAAAAAAAA}
\end{eqnarray*}
\begin{eqnarray*}
\lefteqn{\int_{\mathbb{R}^2}\left|\frac{x^2}{y^2}f''(x y)\,\e^{iy^2/2}\chi_k\left(\frac{y}{n_k}
\right)\right|^2\,\mathrm{d}x\,\mathrm{d}y=\int_{n_k}^{kn_k}\int_{\mathbb{R}}
\frac{1}{y}\left|\frac{t^2}{y^4}f''(t)\chi_k\left(\frac{y}{n_k}\right)\right|^2\,
\mathrm{d}t\,\mathrm{d}y} \\ && \le\frac{1}{n_k^8}\int_1^k|\chi_k(z)|^2
\mathrm{d}z\,\int_{\mathbb{R}}t^4|f''(t)|^2\,\mathrm{d}t\,,
\phantom{AAAAAAAAAAAAAAAAAAAAAAAA}
\end{eqnarray*}
\begin{eqnarray*}
\lefteqn{\int_{\mathbb{R}^2}\left|\frac{x}{y}f'(x y)\,\e^{iy^2/2}\chi_k\left(\frac{y}{n_k}\right)
\right|^2\,\mathrm{d}x\,\mathrm{d}y
=\int_{n_k}^{kn}\int_{\mathbb{R}}\frac{1}{y}\left|\frac{t}{y^2}f'(t)
\chi_k\left(\frac{y}{n_k}\right)\right|^2\,\mathrm{d}t\,\mathrm{d}y} \\ &&
\le\frac{1}{n_k^4}\int_1^k|\chi_k(z)|^2
\mathrm{d}z\,\int_{\mathbb{R}}t^2|f'(t)|^2\,\mathrm{d}t\,,
\phantom{AAAAAAAAAAAAAAAAAAAAAAAA}
\end{eqnarray*}
\begin{eqnarray*}
\lefteqn{\int_{\mathbb{R}^2}\left|\frac{x}{n_k y^2}f'(x y)\,\e^{iy^2/2}\chi_k'\left(\frac{y}{n_k}
\right)\right|^2\,\mathrm{d}x\,\mathrm{d}y
=\int_{n_k}^{kn_k}\int_{\mathbb{R}}\frac{1}{y}\left|\frac{t}{n_k y^3}f'(t)
\chi_k'\left(\frac{y}{n_k}\right)\right|^2\,\mathrm{d}t\,\mathrm{d}y} \\ &&
\le\frac{1}{n_k^8}\int_1^k|\chi_k'(z)|^2
\mathrm{d}z\,\int_{\mathbb{R}}t^2|f'(t)|^2\,\mathrm{d}t\,,
\phantom{AAAAAAAAAAAAAAAAAAAAAAAA}
\end{eqnarray*}
\begin{eqnarray*}
\lefteqn{\int_{\mathbb{R}^2}\left|\frac{1}{y^2}f(x y)\,\e^{iy^2/2}\chi_k\left(\frac{y}{n_k}\right)
\right|^2\,\mathrm{d}x\,\mathrm{d}y=\int_{n_k}^{kn_k}
\int_{\mathbb{R}}\frac{1}{y}\left|\frac{1}{y^2}f(t)\chi_k\left(\frac{y}{n_k}\right)
\right|^2\,\mathrm{d}t\,\mathrm{d}y} \\ &&
\le\frac{1}{n_k^4}\int_1^k|\chi_k(z)|^2
\mathrm{d}z\,\int_{\mathbb{R}}|f(t)|^2\,\mathrm{d}t\,,
\phantom{AAAAAAAAAAAAAAAAAAAAAAAA}
\end{eqnarray*}
\begin{eqnarray*}
\lefteqn{\int_{\mathbb{R}^2}\left|\frac{1}{n_k^2y^2}
f(x y)\,\e^{iy^2/2}\chi_k''\left(\frac{y}{n_k}\right)
\right|^2\,\mathrm{d}x\,\mathrm{d}y=\int_{n_k}^{kn_k}\int_{\mathbb{R}}
\frac{1}{y}\left|\frac{1}{n_k^2y^2}f(t)\chi_k''\left(\frac{y}{n_k}\right)\right|^2\,
\mathrm{d}t\,\mathrm{d}y} \\ &&
\le\frac{1}{n_k^8}\int_1^k|\chi_k''(z)|^2
\mathrm{d}z\,\int_{\mathbb{R}}|f(t)|^2\,\mathrm{d}t\,,
\phantom{AAAAAAAAAAAAAAAAAAAAAAAA}
\end{eqnarray*}
\begin{eqnarray*}
\lefteqn{\int_{\mathbb{R}^2}\left|\frac{1}{n_k y}f(x y)\,\e^{iy^2/2}\chi'_k\left(\frac{y}{n_k}
\right)\right|^2\,\mathrm{d}x\,\mathrm{d}y=\int_{n_k}^{kn_k}
\int_{\mathbb{R}}\frac{1}{y}\left|\frac{1}{n_k y}f(t)\chi'_k\left(\frac{y}{n_k}\right)
\right|^2\,\mathrm{d}t\,\mathrm{d}y} \\ && \le\frac{1}{n_k^4}\int_1^k|\chi'_k(z)|^2
\mathrm{d}z\,\int_{\mathbb{R}}|f(t)|^2\,\mathrm{d}t\,,
\phantom{AAAAAAAAAAAAAAAAAAAAAAAA}
\end{eqnarray*}
\begin{eqnarray*}
\lefteqn{\int_{\mathbb{R}^2}\left|\frac{x}{y^3}f'(x y)\,\e^{iy^2/2}\chi_k\left(\frac{y}{n_k}\right)
\right|^2\,\mathrm{d}x\,\mathrm{d}y=\int_{n_k}^{kn_k}\int_{\mathbb{R}}
\frac{1}{y}\left|\frac{t}{y^4}f'(t)\chi_k\left(\frac{y}{n_k}\right)\right|^2\,
\mathrm{d}t\,\mathrm{d}y} \\ &&
\le\frac{1}{n_k^8}\int_1^k|\chi_k(z)|^2\,\mathrm{d}z\,\int_{\mathbb{R}}t^2|f'(t)|^2\,\mathrm{d}t\,,
\phantom{AAAAAAAAAAAAAAAAAAAAAAAA}
\end{eqnarray*}
\begin{eqnarray*}
\lefteqn{\int_{\mathbb{R}^2}\left|\frac{1}{y^4}f(x y)\,\e^{iy^2/2}\chi_k\left(\frac{y}{n_k}\right)
\right|^2\,\mathrm{d}x\,\mathrm{d}y=
\int_{n_k}^{kn_k}\int_{\mathbb{R}}\frac{1}{y}\left|\frac{1}{y^4}f(t)
e^{iy^2/2}\chi_k\left(\frac{y}{n_k}\right)\right|^2\,\mathrm{d}t\,\mathrm{d}y} \\ &&
\le\frac{1}{n_k^8}\int_1^k|\chi_k(z)|^2
\mathrm{d}z\,\int_{\mathbb{R}}|f(t)|^2\,\mathrm{d}t\,.
\phantom{AAAAAAAAAAAAAAAAAAAAAAAA}
\end{eqnarray*}
Consequently, choosing $n_k$ large enough we can achieve that the sum of all the integrals at the left-hand sides of the above inequalities is less than $\varepsilon$. Then we have
\begin{eqnarray*}
&& \int_{\mathbb{R}^2}|H\psi_k|^2(x,y)\,\mathrm{d}x\,\mathrm{d}y<\int_{n_k}^{kn_k}
\int_{\mathbb{R}}\biggl|y^2h''(xy)\chi_k\left(\frac{y}
{n_k}\right)+f''(xy)\chi_k\left(\frac{y}{n_k}\right) \\ &&
\qquad +2ixyh'(xy)\chi_k\left(\frac{y}
{n_k}\right)+ih(xy)\chi_k\left(\frac{y}{n_k}
\right) -y^2h(xy)\chi_k\left(\frac{y}
{n_k}\right) \\ && \qquad
+\frac{2iy}{n_k}h(xy)\chi_k'\left(\frac{y}{n_k}\right)-f(xy)
\chi_k\left(\frac{y}{n_k}\right) -\omega^2\,y^2h(xy)\chi_k\left(\frac{y}{n_k}\right)
\\ && \qquad -\omega^2\,f(xy)\chi_k\left(\frac{y}{n_k}\right) +y^2\,V(xy)h(xy)\chi_k\left(\frac{y}{n_k}\right)
\\ && \qquad +V(xy)f(xy)\chi_k\left(\frac{y}{n_k}\right)\biggr|^2\,\mathrm{d}x\,\mathrm{d}y+
\varepsilon \\ &&
=\int_{n_k}^{kn_k}\int_{\mathbb{R}}\biggl|y^2\left(h''(xy)-\omega^2h(xy)+V(xy)h(xy)-
h(xy)\right)\chi_k\left(\frac{y}{n_k}\right) \\ && \qquad
+ih(xy)\chi_k\left(\frac{y}{n_k}\right)
+f''(xy)\chi_k\left(\frac{y}{n_k}\right)+2ixyh'(xy)\chi_k\left(\frac{y}{n_k}\right) \\ && \qquad +\frac{2iy}{n_k}h(xy)\chi'_k\left(\frac{y}{n_k}\right)-f(xy)\chi_k\left(\frac{y}{n_k}
\right) -\omega^2\,f(xy)\chi_k\left(\frac{y}{n_k}\right) \\ && \qquad
+V(xy)f(xy)\chi_k\left(\frac{y}{n_k}\right)\biggr|^2\,\mathrm{d}x\,\mathrm{d}y
+\varepsilon\,.
\end{eqnarray*}
Using the fact that $Lh=-h$ and applying the Cauchy inequality, the above result implies
\begin{eqnarray*}
&& \int_{\mathbb{R}^2}|H\psi_k|^2(x,y)\,\mathrm{d}x\,\mathrm{d}y<\int_{n_k}^{kn_k}
\int_{\mathbb{R}}\biggl|\biggl(f''(xy)+2ixyh'(xy)+ih(xy)-f(xy) \\ && \qquad
-\omega^2\,f(xy) +V(xy)f(xy)\biggr)\chi_k\left(\frac{y}{n_k}\right)+\frac{2iy}{n_k}h(xy)\chi_k'
\left(\frac{y}{n_k}\right)\biggr|^2\,\mathrm{d}x\,\mathrm{d}y
+\varepsilon \\ && \le2\int_1^k\frac{1}{z}|\chi_k(z)|^2\,\mathrm{d}z\,\int_{\mathbb{R}}
\biggl|-f''(t)+f(t)\left(1+\omega^2-V(t)\right)-2ith'(t)-ih(t))\biggr|^2
\mathrm{d}t \\ && \qquad
+8\int_1^kz|\chi_k'(z)|^2\,\mathrm{d}z+\varepsilon \\ && \le2\int_{\mathbb{R}}
\biggl|-f''(t)+f(t)\left(1+\omega^2-V(t)\right)-2ith'(t)-ih(t)\biggr|^2
\,\mathrm{d}t+9\varepsilon\,.
\end{eqnarray*}
It is easy to check that
$-(t^2h(t))''+t^2h(t)(1+\omega^2-V(t))=-4th'(t)-2h(t)$, hence $-f''(t)+f(t)\left(1+\omega^2-V(t)\right)-2ith'(t)-ih(t)=0$ and the last integral in the above estimate vanishes, which gives
\begin{equation}
\label{final}
\int_{\mathbb{R}^2}|H\psi_k|^2(x,y)\,\mathrm{d}x\,\mathrm{d}y\le 9\varepsilon\,.
\end{equation}
To complete the proof we fix a sequence $\{\varepsilon_j\}_{j=1}^\infty$ such that $\varepsilon_j\searrow0$ holds as $j\to\infty$ and to any $j$ we construct a function $\psi_{k(\varepsilon_j)}$ with the corresponding numbers chosen in such a way that $n_{k(\varepsilon_j)} >k(\varepsilon_{j-1}) n_{k(\varepsilon_{j-1})}$. The norms of $H\psi_{k(\varepsilon_j)}$ satisfy inequality which (\ref{final}) with $9\varepsilon_j$ on the right-hand side, and since the supports of $\psi_{k(\varepsilon_j)},\:j =1,2,\ldots,$ do not intersect each other by construction, their sequence converges weakly to zero. This yields the sought Weyl sequence for zero energy; for any nonzero real number $\mu$ we use the same procedure replacing the above $\psi_k$ with
$$
\psi_k(x,y)=h(xy)\,\e^{i\epsilon_\mu(y)} \chi_k\left(\frac{y}{n_k}\right)+
\frac{f(xy)}{y^2}\,\e^{i\epsilon_\mu(y)} \chi_k\left(\frac{y}{n_k}\right)\,,
$$
where $\epsilon_\mu(y):= \displaystyle{\int_{\sqrt{|\mu|}}^y\sqrt{t^2+\mu}\,\mathrm{d}t}$, and furthermore, the functions $f,\,\chi_k$ defined in the same way as above.
\end{proof}

\section{Intervals and multiple channels}
\setcounter{equation}{0}\label{s: restr-mult}

Let us look next how the above result changes if the motion in the $x$ direction is restricted. We have the following result:

\begin{theorem} \label{th: interval}
Let $H$ be the operator on $L^2(-c,c)\otimes L^2(\R)$ for some $c>0$ given by the differential expression (\ref{H1}) with Dirichlet condition at $x=\pm c$ and denote by $L$ the corresponding Dirichlet operator (\ref{comparison}) on $L^2(-c,c)$. If the spectral threshold of $L$ is negative, the spectrum of $H$ covers the whole real axis.
\end{theorem}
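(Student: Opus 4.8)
The plan is to reduce the statement to the whole-plane theorem of Section~\ref{s: unboundedness} and then to correct the resulting trial functions so that they respect the Dirichlet condition at $x=\pm c$. The starting observation is that the essential mechanism — a transverse ground state energy that grows negatively like $y^2$ — is governed at large heights not by $L$ on the interval but by the whole-line comparison operator $L_\infty:=-\frac{\mathrm{d}^2}{\mathrm{d}x^2}+\omega^2-\lambda V$ on $L^2(\R)$. Since a Dirichlet function on $(-c,c)$, extended by zero, is an admissible trial function for the quadratic form of $L_\infty$, the min--max principle gives $\inf\sigma(L_\infty)\le\inf\sigma(L)<0$. As $V$ has compact support we have $\sigma_{\mathrm{ess}}(L_\infty)=[\omega^2,\infty)$, so this negative threshold is a simple isolated eigenvalue $-\nu^2<0$ whose eigenfunction $h$ is smooth and decays exponentially, $|h(t)|+|h'(t)|+|h''(t)|\le C\e^{-\kappa|t|}$ for some $\kappa>0$. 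This is precisely the input used in the preceding proof.

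Next, for a fixed $\mu\in\R$ I would take over \emph{verbatim} the Weyl sequence of the supercritical theorem, built from this $h$: functions of the form (\ref{sequence}), with the threshold $-\nu^2$ carried through the phase and the correction term in the obvious way, supported in the strip of heights $n_k<y<kn_k$, normalized so that $\|\psi_k\|\ge\frac12$, and satisfying the estimate (\ref{final}). The one extra feature I need beyond what is proved there is a localisation statement: because $h$, and hence $f(t)=-\frac{i}{2}t^2h(t)$, decays exponentially, the function $\psi_k$ and its first $x$-derivative are pointwise $\mathcal{O}(\e^{-\kappa c' n_k})$ on any region $\{|x|\ge c'\}$ with $c'>0$, uniformly for the relevant heights $y\ge n_k$. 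This is immediate, since there the argument obeys $|xy|\ge c'n_k$.

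Finally I would transplant the sequence to the strip. Fix a cut-off $\eta\in C_0^\infty(-c,c)$ with $0\le\eta\le1$ and $\eta\equiv1$ on $(-\tfrac{c}{2},\tfrac{c}{2})$, and set $\Psi_k:=\eta\,\psi_k$. Then $\Psi_k$ vanishes in a neighbourhood of $x=\pm c$, hence lies in the domain of the Dirichlet operator $H$ on the strip, and since $\eta$ depends on $x$ only,
\[
(H-\mu)\Psi_k=\eta\,(H-\mu)\psi_k-\eta''\psi_k-2\eta'\,\partial_x\psi_k .
\]
The first term is bounded by $\|(H-\mu)\psi_k\|$, controlled by (\ref{final}); the remaining two are supported in $\{\tfrac{c}{2}\le|x|\le c\}$, where $\psi_k$ and $\partial_x\psi_k$ are exponentially small in $n_k$ by the previous step, so they tend to zero once $n_k$ is chosen large enough for the given $k$. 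The same localisation shows $\|\Psi_k\|\ge\frac12$ eventually. Choosing the heights $n_k$ to increase fast enough that the supports in $y$ are mutually disjoint, the sequence $\{\Psi_k\}$ converges weakly to zero, so it is a singular Weyl sequence and $\mu\in\sigma_{\mathrm{ess}}(H)$; as $\mu$ was arbitrary, $\sigma(H)=\R$.

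I expect the only genuinely new point to verify to be the exponential smallness of the commutator terms $\eta''\psi_k+2\eta'\partial_x\psi_k$; everything else is inherited from the whole-plane argument. This smallness rests on two facts — the exponential decay of the ground state $h$ of $L_\infty$, and the fact that the trial functions are concentrated at heights $y\ge n_k\to\infty$, so that the fixed $x$-distance $\tfrac{c}{2}$ to the support of $\eta'$ forces the large value $|xy|\ge\tfrac{c}{2}n_k$ of the argument of $h$. It is worth noting that the interval hypothesis enters only through the implication $\inf\sigma(L_\infty)<0$, so the same reasoning in fact establishes the conclusion under this formally weaker assumption.
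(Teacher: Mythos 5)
Your proposal is correct and follows essentially the same route as the paper's own proof: the paper likewise passes to the whole-line comparison operator via Dirichlet bracketing ($\widetilde{L}\le L_1\oplus L_2\oplus L_3$, your zero-extension/min--max step), multiplies the whole-plane Weyl functions by a cut-off $\phi\in C_0^2(-1,1)$ equal to $1$ on $[-1/2,1/2]$, and disposes of the resulting $\phi'$, $\phi''$ terms using the exponential decay of $h$ at arguments $|xy|\ge n_k/2$. Your commutator identity $(H-\mu)(\eta\psi_k)=\eta(H-\mu)\psi_k-\eta''\psi_k-2\eta'\,\partial_x\psi_k$ is merely a tidier bookkeeping of the same error estimates the paper carries out term by term.
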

\begin{proof}
Without loss of generality we may suppose that $c=1$. We shall apply again Weyl's criterion modifying the argument of the previous section.
By Dirichlet bracketing, one has that $\widetilde{L}\le\oplus_{k=1}^3 L_k$, where $\widetilde{L}$ is the original comparison operator (\ref{comparison}), $\widetilde{L}=-\frac{\mathrm{d}^2}{\mathrm{d}x^2}+\omega^2-V$ on $L^2(\mathbb{R})$, while $L_1$ and $L_j,\:j=2,3,$ are given by the same differential expression on $L^2(-1,1)$ and $L^2(-\infty,-1),\, L^2(1,\infty)$, respectively. Thus under the assumption the spectral threshold of $\widetilde{L}$ is negative, and without loss of generality we may suppose that its ground state satisfies $\widetilde{L}h=-h$ with $\|h\|=1$ and show that $0\in\sigma_{\mathrm{ess}}(H)$. The functions (\ref{sequence}) are now changed as follows,
$$
\psi_k(x,y)=h(x y)\,\e^{iy^2/2}\chi_k\left(\frac{y}{n_k}\right)\phi(x)
+\frac{f(x y)}{y^2}\,\e^{iy^2/2}\chi_k\left(\frac{y}{n_k}\right)\phi(x)
$$
with $\phi\in C_0^2(-1,1)$ such that $\phi(x)=1$ holds for $|x|\le\frac{1}{2}$, while the numbers $k=k(\varepsilon),\,n_k\in\mathbb{N}$ and functions $\chi_k,\, f$ are the same as before. Instead of the estimates (\ref{firstpart}) and (\ref{secondpart}) we now have for large enough $n_k$ the inequalities
\begin{eqnarray*}
\lefteqn{\int_{-1}^1\int_{n_k}^{kn_k}\left|h(xy)\,\e^{iy^2/2}\chi_k\left(\frac{y}{n_k}\right)
\phi(x)\right|^2\,\mathrm{d}x\,\mathrm{d}y} \\ &&
\ge\int_{n_k}^{kn_k}\int_{-1/2}^{1/2}\left|h(xy)\chi_k\left(\frac{y}{n_k}\right)
\right|^2\,\mathrm{d}t\,\mathrm{d}y \\ &&
=\int_{n_k}^{kn_k}\int_{-y/2}^{y/2}\frac{1}{y}\left|h(t)\chi_k\left(\frac{y}{n_k}\right)
\right|^2\,\mathrm{d}t\,\mathrm{d}y \\ && \ge\int_{n_k}^{kn_k}\int_{-n_k/2}^{n_k/2}\frac{1}{y}\left|h(t)
\chi_k\left(\frac{y}{n_k}\right)\right|^2\,\mathrm{d}t\,\mathrm{d}y \\ &&
= \int_1^k\frac{1}{z}|\chi_k(z)|^2\,\mathrm{d}z\,\int_{-n_k/2}^{n_k/2}|h(t)|^2\,\mathrm{d}t
\ge\frac{1}{2}
\end{eqnarray*}
and
\begin{eqnarray*}
\lefteqn{\int_{-1}^1\int_{n_k}^{kn_k}\left|\frac{f(x y)}{y^2}\,\e^{iy^2/2}\chi_k\left(\frac{y}{n_k}\right)\phi(x)
\right|^2\,\mathrm{d}x\,\mathrm{d}y} \\ &&
\le\|\phi\|_{L^\infty(\mathbb{R})}^2
\int_{-1}^1\int_{n_k}^{kn_k}\left|\frac{f(xy)}{y^2}\chi_k\left(\frac{y}{n_k}\right)
\right|^2 \mathrm{d}x\,\mathrm{d}y \\ &&
\le\|\phi\|_{L^\infty(\mathbb{R})}^2\int_{\mathbb{R}}\int_{n_k}^{kn_k}
\left|\frac{f(x y)}{y^2}\chi_k\left(\frac{y}{n_k}\right)\right|^2\,\mathrm{d}x\,\mathrm{d}y \\ &&
=\|\phi\|_{L^\infty(\mathbb{R})}^2\int_{n_k}^{kn_k}\int_{\mathbb{R}}\frac{1}{y}
\left|\frac{f(t)}{y^2}\chi_k\left(\frac{y}{n_k}\right)\right|^2\,\mathrm{d}t\,\mathrm{d}y \\ &&
\le\frac{1}{n_k^4}\|\phi\|_{L^\infty(\mathbb{R})}^2\int_1^k|\chi_k(z)|^2\,\mathrm{d}z\,\int_{\mathbb{R}}
|f(t)|^2\,\mathrm{d}t<\varepsilon\,,
\end{eqnarray*}
which means that $\|\psi_k\|_{L^2(\mathbb{R}^2)}\ge\frac{1}{2}-2\sqrt{\varepsilon}$ holds for $n_k$ large enough; our aim is to show that $\|H\psi_k\|_{L^2(\mathbb{R}^2)}^2<d\varepsilon$ with a fixed $d>0$. Let us first compute the partial derivatives
\begin{eqnarray*}
\lefteqn{\frac{\partial^2\psi_k}{\partial x^2}=y^2h''(x y)\,\e^{iy^2/2}\chi_k\left(\frac{y}{n_k}
\right)\phi(x)+2yh'(xy)\,\e^{iy^2/2}\chi_k\left(\frac{y}{n_k}\right)\phi'(x)} \\ && +
h(xy)\,\e^{iy^2/2}\chi_k\left(\frac{y}{n_k}\right)\phi''(x)+f''(x y)\,\e^{iy^2/2}\chi_k
\left(\frac{y}{n_k}\right)\phi(x) \\ && +\frac{2}{y}f'(xy)\,\e^{iy^2/2}\chi_k\left(\frac{y}{n_k}\right)\phi'(x)+
\frac{1}{y^2}f(x y)\,\e^{iy^2/2}\chi_k\left(\frac{y}{n_k}\right)\phi''(x)
\end{eqnarray*}
and
\begin{eqnarray*}
\lefteqn{\frac{\partial^2\psi_k}{\partial y^2}=x^2h''(x y)\,\e^{iy^2/2}\chi_k\left(\frac{y}{n_k}
\right)\phi(x)+2ix y h'(x y)\,\e^{iy^2/2}\chi_k\left(\frac{y}{n_k}\right)\phi(x)} \\ &&
+\frac{2x}{n_k}h'(x y)\,\e^{iy^2/2}\chi_k'\left(\frac{y}{n_k}\right)\phi(x)-y^2h(x y)\,\e^{iy^2/2}
\chi_k\left(\frac{y}{n_k}\right)\phi(x) \\ && +i h(x y)\,\e^{iy^2/2}
\chi_k\left(\frac{y}{n_k}\right)\phi(x)+2\frac{i y}{n_k}h(x y)\,\e^{iy^2/2}
\chi'_k\left(\frac{y}{n_k}\right)\phi(x) \\ &&
+\frac{1}{n_k^2}h(x y)\,\e^{iy^2/2}\chi_k''\left(\frac{y}{n_k}\right)\phi(x)
+\frac{x^2}{y^2}f''(x y)\,\e^{iy^2/2}\chi_k\left(\frac{y}{n_k}\right)\phi(x) \\ &&
+2\frac{i x}{y}f'(x y)\,\e^{iy^2/2}\chi_k\left(\frac{y}{n_k}\right)\phi(x)+
\frac{2x}{n_k y^2}f'(x y)\,\e^{iy^2/2}\chi_k'\left(\frac{y}{n_k}\right)\phi(x) \\ && -
f(x y)\,\e^{iy^2/2}\chi_k\left(\frac{y}{n_k}\right)\phi(x)+\frac{i}{y^2}f(x y)\,\e^{iy^2/2}
\chi_k\left(\frac{y}{n_k}\right)\phi(x) \\ && +\frac{1}{y^2 n_k^2}f(x y)
e^{iy^2/2}\chi_k''\left(\frac{y}{n_k}\right)\phi(x)+\frac{2i}{n_ky}f(x y)\,\e^{iy^2/2}\chi_k'
\left(\frac{y}{n_k}\right)\phi(x) \\ && -\frac{4x}{y^3}f'(x y)\,\e^{iy^2/2}
\chi_k\left(\frac{y}{n_k}\right)\phi(x)-\frac{4i}{y^2}f(x y)\,\e^{iy^2/2}
\chi_k\left(\frac{y}{n_k}\right)\phi(x) \\ &&
-\frac{4}{n_k y^3}f(x y)\,\e^{iy^2/2}
\chi_k'\left(\frac{y}{n_k}\right)\phi(x)+\frac{6}{y^4}f(x y)\,\e^{iy^2/2}\chi_k\left(\frac{y}{n_k}
\right)\phi(x).
\end{eqnarray*}
Using the exponential decay of $h$ and the fact that $\phi$ is constant on $[-1/2,1/2]$ we find that for all sufficiently large $n_k$ we have
\begin{eqnarray*}
\lefteqn{\int_{-1}^1\int_{n_k}^{kn_k}\left|y\,h'(x y)\,\e^{iy^2/2}\chi_k\left(\frac{y}{n_k}
\right)\phi'(x)\right|^2\,\mathrm{d}x\,\mathrm{d}y}
\\ && \le\|\phi'\|^2_{L^\infty(\mathbb{R})}\int_{|x|>\frac{1}{2}}\int_{n_k}^{kn_k}
\left|y\,h'(xy)\chi_k
\left(\frac{y}{n_k}\right)\right|^2\mathrm{d}t\,\mathrm{d}y
\\ && \le\|\phi'\|^2_{L^\infty(\mathbb{R})}\int_{n_k}^{kn_{k}}
\int_{n_k/2}^\infty\frac{1}{y}\left|y\,h'(t)\chi_k\left(\frac{y}{n_k}\right)
\right|^2\,\mathrm{d}t\,\mathrm{d}y \\ && +\|\phi'\|^2_{L^\infty(\mathbb{R})}
\int_{n_k}^{kn_{k}}\int_{-\infty}^{-n_k/2}\frac{1}{y}\left|y\,h'(t)\chi_k\left(\frac{y}
{n_k}\right)\right|^2\,\mathrm{d}t\,\mathrm{d}y \\ &&
=n_k^2\|\phi'\|^2_{L^\infty(\mathbb{R})}\int_1^k z\chi_k^2(z)\,\mathrm{d}z\,
\int_{n_k/2}^\infty|h'(t)|^2\,\mathrm{d}t \\ && +n_k^2\|\phi'\|^2_{L^\infty
(\mathbb{R})}\int_1^k z\chi_k^2(z)\,\mathrm{d}z\,\int_{-\infty}^{-n_k/2}|h'(t)|^2\,\mathrm{d}t<
\varepsilon\,.
\end{eqnarray*}
in a similar way,
\begin{eqnarray*}
&& \int_{-1}^1\int_{n_k}^{kn_k}\left|h(x y)\,\e^{iy^2/2}\chi_k\left(\frac{y}{n_k}
\right)\phi''(x)\right|^2\,\mathrm{d}x\,\mathrm{d}y<\varepsilon\,,
\\ &&
\int_{-1}^1\int_{n_k}^{kn_k}\left|\frac{1}{y^2}\,f(x y)\,\e^{iy^2/2}
\chi_k\left(\frac{y}{n_k}\right)\phi''(x)\right|^2\,\mathrm{d}x\,\mathrm{d}y<
\varepsilon\,, \\ &&
\int_{-1}^1\int_{n_k}^{kn_k}\left|\frac{1}{y}\,f'(x y)\,\e^{iy^2/2}\chi_k\left(\frac{y}
{n_k}\right)\phi'(x)\right|^2\,\mathrm{d}x\,\mathrm{d}y<\varepsilon\,.
\end{eqnarray*}
As for the remaining term in the partial derivative expressions, we simply repeat our calculations from previous section. In this way we are able to conclude that for large enough $k$, and respectively $n_k$ we have
\begin{eqnarray*}
&& \int_{\mathbb{R}^2}|H\psi_k|^2\,\mathrm{d}x\,\mathrm{d}y<\|\phi\|^2_{L^\infty
(\mathbb{R})}\int_{n_k}^{kn_k}\int_{-1}^1\biggl|y^2h''(xy)\chi_k\left(\frac{y}
{n_k}\right) \\ && \qquad +f''(xy)\chi_k\left(\frac{y}{n_k}\right) +2ixyh'(xy)\chi_k\left(\frac{y}
{n_k}\right)+ih(xy)\chi_k\left(\frac{y}{n_k}
\right)  \\ && \qquad  -y^2h(xy)\chi_k\left(\frac{y}
{n_k}\right)+\frac{2iy}{n_k}h(xy)\chi_k'\left(\frac{y}{n_k}\right)-f(xy)
\chi_k\left(\frac{y}{n_k}\right) \\ && \qquad -\omega^2\,y^2h(xy)\chi_k\left(\frac{y}{n_k}\right)
-\omega^2\,f(xy)\chi_k\left(\frac{y}{n_k}\right)
+y^2\,V(xy)h(xy)\chi_k\left(\frac{y}{n_k}\right) \\ && \qquad
+V(xy)f(xy)\chi_k\left(\frac{y}{n_k}\right)\biggr|^2\,\mathrm{d}x\,\mathrm{d}y+
\varepsilon  \\ &&
=\|\phi\|^2_{L^\infty(\mathbb{R})}\int_{n_k}^{kn_k}\int_{-1}^1\biggl|y^2\big(h''(xy)
-\omega^2h(xy)+V(xy)h(xy) \\ && \qquad -h(xy)\big)\chi_k\left(\frac{y}{n_k}\right)+ih(xy)\chi_k
\left(\frac{y}{n_k}\right) +f''(xy)\chi_k\left(\frac{y}{n_k}\right) \\ && \qquad +2ixyh'(xy)
\chi_k\left(\frac{y}{n_k}\right)+\frac{2iy}{n_k}h(xy)\chi'_k\left(\frac{y}{n_k}\right)
-f(xy)\chi_k\left(\frac{y}{n_k}\right) \\ && \qquad -\omega^2
\,f(xy)\chi_k\left(\frac{y}{n_k}\right)+V(xy)f(xy)\chi_k\left(\frac{y}{n_k}\right)\biggr|^2\,\mathrm{d}x\,\mathrm{d}y
+\varepsilon\,.
\end{eqnarray*}
Using the assumption about the ground state of $\widetilde{L}$, the last equation implies
\begin{eqnarray*}
&& \hspace{-2em} \int_{-1}^1\int_{\mathbb{R}}|H\psi_k|^2\,\mathrm{d}x\,\mathrm{d}y<\|\phi\|^2_{L^\infty(\mathbb{R}
)}\int_{n_k}^{kn_k}\int_{-1}^1\biggl|\biggl(f''(xy)+2ixyh'(xy)+ih(xy) \\ &&  \hspace{-2em} -f(xy) -
\omega^2\,f(xy)
+V(xy)f(xy)\biggr)\chi_k\left(\frac{y}{n_k}\right)+\frac{2iy}{n_k}h(xy)\chi_k'
\left(\frac{y}{n_k}\right)\biggr|^2\,\mathrm{d}x\,\mathrm{d}y
+\varepsilon.
\end{eqnarray*}
Using the fact that $f(t)=-\frac{i}{2}t^2h(t)$ we conclude in the same way as in the previous section that the right-hand side of the last inequality can be estimated by $9\|\phi\|^2_{L^\infty(\mathbb{R})}\varepsilon$.

The rest of the proof follows the same routine. We pick a sequence $\{\varepsilon_j\}_{j=1}^\infty$ such that $\varepsilon_j\searrow0$ holds as $j\to\infty$ and to any $j$ we construct a function $\psi_{k(\varepsilon_j)}$ with the corresponding numbers chosen in such a way that $n_{k(\varepsilon_j)} >k(\varepsilon_{j-1}) n_{k(\varepsilon_{j-1})}$. The norms of $H\psi_{k(\varepsilon_j)}$ satisfy inequality which (\ref{final}) with $9\|\phi\|^2_{L^\infty(\mathbb{R})}\varepsilon_j$ on the right-hand side, and the sequence $\{\psi_{k(\varepsilon_j)}\}_{j=1}^\infty$ converges weakly to zero by construction, their sequence converges weakly to zero. This proves that $0\in\sigma_{\mathrm{ess}}(H)$; for any nonzero real number $\mu$ we proceed in the same way replacing the above $\psi_k$ with
$$
\psi_k(x,y)=h(xy)\,\e^{i\epsilon_\mu(y)} \chi_k\left(\frac{y}{n_k}\right)\phi(x)+
\frac{f(xy)}{y^2}\,\e^{i\epsilon_\mu(y)} \chi_k\left(\frac{y}{n_k}\right)\phi(x)\,,
$$
where $\epsilon_\mu(y):= \displaystyle{\int_{\sqrt{|\mu|}}^y\sqrt{t^2+\mu}\,\mathrm{d}t}$, and furthermore, the functions $f,\,\chi_k,\,\phi$ defined in the same way as above.
\end{proof}

Observing the domains of the quadratic form associated with such operators we can extend the result in the following way:

\begin{corollary}
The claim of Theorem~\ref{th: interval} remains valid if the Dirichlet boundary conditions at $x=\pm c$ are replaced by any other self-adjoint boundary conditions.
\end{corollary}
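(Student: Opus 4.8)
The plan is to recycle, without any new computation, the singular Weyl sequences constructed in the proof of Theorem~\ref{th: interval}, the point being that those sequences live deep inside the strip and therefore cannot feel the boundary condition imposed at $x=\pm c$. Write $H$ for the self-adjoint realization of the differential expression (\ref{H1}) on $L^2((-c,c)\times\mathbb{R})$ corresponding to a given self-adjoint boundary condition at $x=\pm c$. The first step is the elementary observation that every such $H$ is a self-adjoint extension of the symmetric operator obtained by letting (\ref{H1}) act on $C_0^\infty((-c,c)\times\mathbb{R})$, so the closure of the latter --- the common minimal operator --- is contained in $H$ whatever the boundary condition. Equivalently, in the language suggested above, the quadratic form of every such $H$ has $C_0^\infty((-c,c)\times\mathbb{R})$ in its domain and all these forms agree there, since the boundary terms that distinguish different self-adjoint conditions at $x=\pm c$ vanish on functions supported away from those lines. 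In particular, any $\psi$ that is compactly supported in the open strip and belongs to $H^2_{\mathrm{loc}}$ lies in $D(H)$ for every admissible boundary condition, and $H\psi$ is then just (\ref{H1}) applied to $\psi$.

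Next I would return to the trial functions of type (\ref{sequence}) used in the proof of Theorem~\ref{th: interval}; note that these do not refer to the boundary condition at all, being assembled from the whole-line comparison ground state $h$ and the cut-offs $\chi_k$ and $\phi\in C_0^2(-c,c)$ (with $c=1$ there). Since $\mathrm{supp}\,\phi$ is a compact subset of the open interval, each $\psi_{k(\varepsilon_j)}$ is supported in $[-c+\delta,c-\delta]\times\mathbb{R}$ for some $\delta>0$, and it is $C^1$ with square-integrable second derivatives; by the first step it therefore lies in $D(H)$, and $H\psi_{k(\varepsilon_j)}$ equals the expression already computed there. Consequently the lower bound $\|\psi_{k(\varepsilon_j)}\|\ge\frac12-2\sqrt{\varepsilon_j}$ and the upper bound $\|H\psi_{k(\varepsilon_j)}-\mu\psi_{k(\varepsilon_j)}\|^2\le d\,\varepsilon_j$ derived in that proof --- the estimates of the type (\ref{firstpart}), (\ref{secondpart}), (\ref{final}) --- carry over verbatim, with the same fixed constant $d$. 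The supports remain mutually disjoint by the inductive choice $n_{k(\varepsilon_j)}>k(\varepsilon_{j-1})n_{k(\varepsilon_{j-1})}$, so the sequence tends weakly to zero and has no norm-convergent subsequence; Weyl's criterion then gives $\mu\in\sigma_{\mathrm{ess}}(H)$. Letting $\mu$ run through $\mathbb{R}$ --- using the phase $\e^{i\epsilon_\mu(y)}$ instead of $\e^{iy^2/2}$ for $\mu\ne0$, exactly as in the previous proofs --- we obtain $\sigma(H)=\mathbb{R}$.

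I do not expect any genuine obstacle here: essentially everything is contained in the first-paragraph remark that all self-adjoint realizations share the interior core and act identically on functions supported there, so none of the quantitative work in the proof of Theorem~\ref{th: interval} has to be redone. The one point requiring a little care is the assertion that the trial functions belong to the operator domain of an \emph{arbitrary} self-adjoint extension, and this is exactly what membership in the minimal operator domain --- guaranteed by their compact support in the open strip --- delivers.
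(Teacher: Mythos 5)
Your argument is correct and is exactly the one the paper intends: the authors justify the corollary by the one-line remark about the (form) domains, the point being precisely that the trial functions carry the cut-off $\phi\in C_0^2(-1,1)$ and hence are supported away from $x=\pm c$, so they lie in the domain of every self-adjoint realization and all the estimates of Theorem~\ref{th: interval} apply verbatim. No gap; your write-up just makes the paper's implicit reasoning explicit.
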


The result also allows us to answer the question about spectral transition for the model with multiple singular channels.

\begin{theorem}
Let $H$ be the operator (\ref{HN}) with the potentials satisfying the stated assumptions, namely the functions $V_j$ are positive with bounded first derivative and $\mathrm{supp}\,V_j \cap \mathrm{supp}\,V_k = \emptyset$ holds for $j\ne k$. Denote by $L_j$ the operator (\ref{comparison}) on $L^2(\R)$ with the potential $V_j$ and $t_V:= \min_j \inf \sigma(L_j)$. Then $H$ is bounded from below if and only if $t_V\ge0$ and in the opposite case its spectrum covers the whole real axis.
\end{theorem}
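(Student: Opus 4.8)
The plan is to obtain the dichotomy by combining the two mechanisms already used in the single-channel case: the Neumann bracketing of Section~\ref{s: boundedness} for the lower bound and the Weyl-sequence construction of Section~\ref{s: unboundedness} for the supercritical regime. Throughout, the role of the disjointness hypothesis $\mathrm{supp}\,V_j\cap\mathrm{supp}\,V_k=\emptyset$ is to let the asymptotics in the variable $t=xy$ be governed by the single one-dimensional comparison operator $L:=-\mathrm{d}^2/\mathrm{d}t^2+\omega^2-\sum_{j=1}^N\lambda_jV_j(t)$ on $L^2(\R)$, which has essential spectrum $[\omega^2,\infty)$ and at most finitely many eigenvalues below it. Since $L\le L_j$ for every $j$ (the difference being $\sum_{k\ne j}\lambda_kV_k\ge0$), positivity of $L$ already forces $t_V\ge0$, and the core analytic point will be the converse implication $t_V\ge0\Rightarrow L\ge0$. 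I would therefore prove: (i) if $t_V<0$ then $\sigma(H)=\R$, and (ii) if $t_V\ge0$ then $H$ is bounded from below; together with the trivial fact that $\sigma(H)=\R$ excludes a lower bound, these give the asserted equivalence.

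For (i) I would run the argument of Section~\ref{s: unboundedness} with the combined operator in place of a single channel. After rescaling we may assume $\inf\sigma(L)=-1$, which is consistent with $t_V<0$ because $L\le L_{j_0}$ for the index $j_0$ realising the minimum and $\inf\sigma(L_{j_0})<0$; denote by $h$ the corresponding normalised eigenfunction, $Lh=-h$. With this $h$ and with $f(t):=-\tfrac{i}{2}t^2h(t)$ the functions (\ref{sequence}) make sense, and since at the support of $\psi_k$ the potential in (\ref{HN}) reduces to $-\sum_j\lambda_j y^2V_j(xy)$, the $y^2$-order part of $H\psi_k$ equals $y^2\big((Lh+h)(xy)\big)\,\e^{iy^2/2}\chi_k(y/n_k)=0$ exactly. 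The remaining lower-order terms are precisely those estimated in (\ref{calculations})--(\ref{final}) with $V$ replaced by $\sum_j\lambda_jV_j$ and with $h$ decaying exponentially (being a below-threshold eigenfunction of a compactly supported perturbation of $-\mathrm{d}^2/\mathrm{d}t^2+\omega^2$); hence the same choice of $\chi_k$ and of $n_k\to\infty$ yields a Weyl sequence at energy $0$, and the replacement $\e^{iy^2/2}\mapsto\e^{i\epsilon_\mu(y)}$ gives one at any $\mu\in\R$, so $\R\subseteq\sigma_{\mathrm{ess}}(H)$.

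For (ii) I would imitate the proof of the theorem in Section~\ref{s: boundedness}. Neumann bracketing into the strips $G_n,\widetilde G_n$ reduces the claim to a uniform lower bound on $\inf\sigma(h_n)$ and $\inf\sigma(\widetilde h_n)$. Since every $\mathrm{supp}\,V_j$ is compact, for all but finitely many $n$ the channels meet $G_n$ only inside a fixed vertical slab $|x|\le\delta$; a further Neumann cut at $x=\pm\delta$ splits off two outer pieces, on which the potential vanishes and the operator is manifestly nonnegative, from the slab, where the potential is $-\sum_j\lambda_j y^2V_j(xy)$. Replacing $y$ by $\ln n$ on the slab at the cost of the familiar $\mathcal{O}(\ln n/n)$ error and rescaling $x=t/(\ln n)$ turns the transverse operator into $\ln^2\!n$ times the Neumann restriction of $L$ to $(-\delta\ln n,\delta\ln n)$; since $\delta\ln n\to\infty$, this restriction differs from $L$ on the whole line by an amount exponentially small in $n$, hence negligible against the $\ln^{-2}\!n$ scale, so the slab contributes $\ln^2\!n\,\big(\inf\sigma(L)-o(\ln^{-2}\!n)\big)$. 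It then remains to deduce $\inf\sigma(L)\ge0$ from $t_V\ge0$, and this is the step I expect to be the main obstacle: the disjointness of the supports is indispensable here, and I would try to establish it by inserting a smooth partition of unity subordinate to pairwise disjoint neighbourhoods of the sets $\mathrm{supp}\,V_j$ together with a complementary piece carrying no potential, so that the quadratic form of $L$ is bounded below by a sum of the single-channel forms attached to the $L_j$, whose nonnegativity is exactly the hypothesis; the delicate part is to show that the localisation error introduced by such a partition of unity is harmless and does not destroy the inequality $\inf\sigma(L)\ge0$. Granting this, the slab and outer contributions combine to give $\inf\sigma(h_n),\inf\sigma(\widetilde h_n)\ge-o(1)$ uniformly in $n$, which yields the desired lower bound for $H$.
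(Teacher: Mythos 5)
There is a genuine gap, and it sits exactly where you suspected. The paper never forms the combined one-dimensional operator $L=-\mathrm{d}^2/\mathrm{d}t^2+\omega^2-\sum_j\lambda_jV_j$: its proof inserts extra Neumann (for the lower bound) and Dirichlet (for the unboundedness) conditions on vertical lines $x=x_j$ chosen between the supports of consecutive $V_j$'s, so that each resulting slab carries a single channel, and then it simply quotes the slab versions of the one-channel results, Corollary~\ref{c: interval} and Theorem~\ref{th: interval}; the two unbounded outer pieces are potential-free. The disjointness of the supports is used only to place these separating lines, never to compare a combined operator with the individual $L_j$'s.

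The step you flag as the main obstacle, $t_V\ge0\Rightarrow\inf\sigma(L)\ge0$ for the combined $L$, is in fact false, so your part (ii) cannot be completed along the proposed route. Take $N=2$ with each $L_j=-\mathrm{d}^2/\mathrm{d}t^2+\omega^2-\lambda_jV_j$ having spectral threshold exactly $0$, attained by a positive eigenfunction $\phi_1$ of $L_1$ (which exists since $0<\omega^2=\inf\sigma_{\mathrm{ess}}$). Then
$\langle\phi_1,L\phi_1\rangle=\langle\phi_1,L_1\phi_1\rangle-\lambda_2\int V_2\phi_1^2\,\mathrm{d}t<0$,
so $\inf\sigma(L)<0$ although $t_V=0$; this is the usual double-well effect, and no partition of unity can repair it because the IMS localization error $-\sum_a|J_a'|^2$ enters with the wrong sign and has a fixed, non-shrinkable size once the supports are at fixed mutual distance. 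Note also the internal tension this creates with your part (i): there you (correctly, for the operator you wrote down) build a Weyl sequence at every energy whenever $\inf\sigma(L)<0$, so if the adiabatic limit of (\ref{HN}) really were governed by the combined $L$, your two halves together would refute the theorem in the borderline case. The way out — and the content of the paper's two-line proof — is to keep the channels separated by bracketing in $x$ \emph{before} passing to any one-dimensional comparison, so that only the individual operators $L_j$ of (\ref{comparison}) ever appear, each handled by the single-channel machinery of Sections~\ref{s: boundedness} and~\ref{s: unboundedness}.
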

\begin{proof}
The claim follows by bracketing. By assumption we can choose points $x_j$ such that
$$
x_0 < v_j^{-} < v_j^{+} < x_1 < v_2^{-} < \cdots < x_{n-1} < v_n^{-} < v_n^{+} < x_n\,,
$$
where $v_j^{-}:= \inf\mathrm{supp\,}V_j$ and $v_j^{+}:= \sup\mathrm{supp\,}V_j$ and impose additional Neumann and Dirichlet boundary conditions at them. The spectrum in the intervals $(-\infty,x_0)$ and $(x_n,\infty)$ is found trivially, to the other components of the direct sum obtained in this way we apply Corollary~\ref{c: interval} and Theorem~\ref{th: interval}, respectively.
\end{proof}

\section*{Acknowledgments}
The research was supported by the Czech Science Foundation within the project P203/11/0701. D.B. is grateful to Mittag-Leffler Institute for the hospitality and to H.~Schulz-Baldes for a useful discussion.


\end{document}